\newcommand{\be}{\begin{equation}}
\newcommand{\ee}{\end{equation}}
\newcommand{\ba}{\begin{eqnarray}}
\newcommand{\ea}{\end{eqnarray}}
\theoremstyle{plain}
\newtheorem{theorem}{Theorem}
\newtheorem{lemma}[theorem]{Lemma}
\newtheorem{proposition}[theorem]{Proposition}
\theoremstyle{definition}
\newtheorem{definition}[theorem]{Definition}
\theoremstyle{remark}
\newtheorem{remark}[theorem]{Remark}
\def\>{\rangle}
\def\<{\langle}
\begin{document}

\title{Quantum state exclusion with many copies}	

\author{Debanjan Roy}
\email{debanjanroy94@gmail.com, debanjan@jcbose.ac.in}
\affiliation{Department of Physical Sciences, Bose Institute, EN 80, Bidhannagar, Kolkata 700091, India}

\author{Tathagata Gupta\footnote{Present address: Department of Physics, Indian Institute of Technology Madras, Chennai 600036, India}}
\email{tathagatagupta@gmail.com}
\affiliation{Physics and Applied Mathematics Unit, Indian Statistical Institute, 203 B. T. Road, Kolkata 700108, India}

\author{Pratik Ghosal}
\email{ghoshal.pratik00@gmail.com}
\affiliation{S. N. Bose National Centre for Basic Sciences, Block JD, Sector III, Salt Lake, Kolkata 700 106, India}

\author{Samrat Sen}
\email{samrat9sen5@gmail.com}
\affiliation{Scuola Normale Superiore, Piazza dei Cavalieri 7, 56126 Pisa, Italy}

\author{Somshubhro Bandyopadhyay}
\email{som@jcbose.ac.in}
\affiliation{Department of Physical Sciences, Bose Institute, EN 80, Bidhannagar, Kolkata 700091, India}

\begin{abstract}
Quantum state exclusion is the task of identifying at least one state from a known set that was not used in the preparation of a quantum system. A set of quantum states is said to admit state exclusion if there exists a measurement whose outcomes can be put in one-to-one correspondence with the states in the set, such that each outcome rules out its corresponding state with certainty (while possibly also ruling out other states), and each outcome occurs with nonzero probability for at least one state in the set. State exclusion, however, is not always possible in the single-copy setting. In this paper, we investigate whether access to multiple identical copies of the system enables state exclusion. We prove that for any set of three or more pure states, state exclusion becomes possible with a finite number of copies. Moreover, we show that the number of copies required may be arbitrarily large: in particular, for every natural number $N$, we construct sets of states for which state exclusion remains impossible with $N$ or fewer copies.
\end{abstract}
\maketitle

\section{Introduction}
\label{sec:introduction}
Consider the following scenario. A referee prepares a quantum system in one of a finite number of states
\( \mathcal{S} = \{\rho_1, \rho_2, \ldots, \rho_k\} \), chosen according to some probability distribution, and hands the system to a player, Alice. Alice has complete knowledge of the set \( \mathcal{S} \), but she does not know which specific state has been prepared in any given run.

The most familiar task in this setting is \emph{quantum state discrimination} \cite{chefles2000quantum, bergou2010discrimination, barnett2009quantum, bae2015quantum}, where Alice attempts to identify the prepared state
by performing a suitable measurement on the system. In classical physics this task can always be carried
out perfectly, provided the possible states are distinct. Quantum mechanics, however, imposes a fundamental restriction: perfect discrimination is possible if and only if the states in \( \mathcal{S} \) are mutually orthogonal \cite{nielsen2010quantum}. As a result, whenever the states are nonorthogonal---as is typical in quantum theory---no measurement can identify the prepared state with certainty and zero error.

The impossibility of perfectly discriminating nonorthogonal quantum states motivates the study of alternative, less demanding information-processing tasks. The central idea behind these tasks is not to identify the prepared state exactly, but rather to extract some reliable information about it. One such task is \emph{quantum state exclusion} \cite{bandyopadhyay2014conclusive}. Here, instead of trying to determine the prepared state, the goal is more modest: identify at least one state in \( \mathcal{S} \) that the system was \emph{not} prepared in. Even this limited information can be useful, as excluding one or more possibilities reduces the set of candidate states consistent with the preparation. In this sense, state exclusion provides partial---but still operationally meaningful---information about the quantum state. As with state discrimination, prior knowledge of the set \( \mathcal{S} \) is essential to design measurements to optimally achieve the goal of exclusion. Although quantum state exclusion was introduced more recently than state discrimination, it has attracted significant attention due to its relevance in quantum foundations \cite{pusey2012reality}, quantum cryptography \cite{wallden2015quantum}, and operational approaches to quantum resources \cite{stratton2024operational,perry2015communication}.

While state exclusion is weaker than state discrimination, it must nevertheless be defined in a physically meaningful way. The basic idea is that a measurement used for state exclusion should always provide definite information about which states have not been prepared. We therefore say that a set of quantum states admits state exclusion if there exists a measurement whose outcomes can be put in one-to-one correspondence with the states in the set, such that observing a given outcome allows one to conclude with certainty that the corresponding state was not prepared (while the same outcome may also rule out other states). In addition, every outcome is required to be relevant, in the sense that it occurs with nonzero probability for at least one state in the set. Equivalently, every measurement outcome excludes at least one state with certainty, and every state in the set is excluded by at least one outcome. This notion may be regarded as perfect state exclusion.

To illustrate this idea, consider the following (unnormalized) set of three states in a three-dimensional Hilbert space:$\{\, |0\rangle + |1\rangle,\; |1\rangle + |2\rangle,\; |2\rangle + |0\rangle \,\}$. If Alice performs a projective measurement in the computational basis \(\{|0\rangle, |1\rangle, |2\rangle\}\), then each measurement outcome excludes exactly one of the states in the set. For example, obtaining the outcome ``\(0\)'' rules out the state \( |1\rangle + |2\rangle \), which has no overlap with \( |0\rangle \). In this way, the measurement achieves state exclusion. However, state exclusion is not always achievable. To see this, consider the following set of three qubit states:$\left\{
|0\rangle,\;
\frac{\sqrt{3}}{2}|0\rangle + \frac{1}{2}|1\rangle,\;
\frac{1}{\sqrt{2}}(|0\rangle + |1\rangle)
\right\}.$
In this case, no measurement can be constructed such that for every state one can associate an outcome that rules it out with certainty \cite{authorComment}.

The above examples show that state exclusion is possible for some sets of quantum states, while for others it is not. A set of quantum states is called \emph{antidistinguishable} if state exclusion, in the sense defined above, is possible for that set \cite{Heinosaari_2018}. Antidistinguishability therefore means that there exists a single measurement by which every state in the set can, in principle, be conclusively ruled out: for each state, there is a measurement outcome which tells us with certainty that this state was not prepared. When a set of states is not antidistinguishable, one may consider relaxed versions of the exclusion task. Some of the approaches include allowing a nonzero probability of error and optimising an appropriate figure of merit, or introducing an additional inconclusive measurement outcome that provides no exclusion but avoids erroneous conclusions. These generalised strategies allow us to quantify how well exclusion can be achieved when (perfect) state exclusion is impossible \cite{bandyopadhyay2014conclusive,mishra2024optimal}.

In this work, we ask whether access to multiple identical copies of a quantum system can enable state exclusion. We note that a related question has recently been considered in the context of a weaker formulation of the state exclusion problem, in which the requirement that every measurement outcome occurs with nonzero probability was not enforced \cite{ji2024barycentric}. In such a formulation, a measurement may be able to exclude only a subset of the states, and sets admitting this possibility are referred to as weakly antidistinguishable \cite{stratton2024operational}. In the present work, such sets are regarded as not antidistinguishable, as they violate the basic idea underlying antidistinguishability---namely, that every state in the set should, in principle, be excludable with certainty by an appropriate measurement outcome.

We now turn to the problem considered here. Suppose that a referee supplies Alice with several copies of a quantum system, all prepared in the same (unknown) state drawn from a known set $\mathcal{S}$ consisting of three or more states (for any set consisting of only two states, state discrimination and state exclusion are equivalent \cite{Heinosaari_2018}, and consequently providing any finite number of copies offers no advantage for state exclusion in this case). The question is whether states in $\mathcal{S}$ that are not antidistinguishable in the single-copy scenario can become antidistinguishable when Alice is given multiple copies of the system. This question is meaningful only when the number of copies is finite. With infinitely many copies, Alice could in principle reconstruct the prepared state via quantum state tomography \cite{d2003quantum}, rendering the task of state exclusion trivial.



The possibility that multiple copies might activate state exclusion is motivated by closely related phenomena in other quantum information–theoretic tasks. A notable example is unambiguous state discrimination, where inconclusive outcomes are permitted but errors are forbidden: it is known that any finite set of pure states becomes unambiguously distinguishable when a sufficiently large (yet finite) number of copies are available \cite{PhysRevA.64.062305}. Similar activation effects also arise in quantum channel discrimination, where any two unitary channels can be perfectly distinguished after finitely many uses \cite{acin2001statistical}.

However, such behaviour is not universal. In local entanglement transformation, for example, any pair of incomparable pure states in $3 \times 3$ remains incomparable for any finite number of copies \cite{bandyopadhyay2002classification}. These contrasting examples show that access to many copies may or may not help in an information processing task. The goal of this work is to determine which of these two behaviours applies to quantum state exclusion.

In this paper we prove the following results:
\begin{itemize}
\item Every set of $k\geq3$ pure states that is not antidistinguishable becomes antidistinguishable with a finite number of copies.

In other words, antidistinguishability is universally activatable in the many-copy regime for pure-state ensembles of size at least three. The number of copies required depends on the set under consideration. For specific classes of pure state sets---namely, sets of three or more states with equal and real pairwise inner products, and sets of three states with equal pairwise overlaps---we determine this number exactly.
\end{itemize}
Our next result demonstrates that the number of copies required may be arbitrarily large. \begin{itemize}
\item For any fixed positive integer $N$, there exist sets of pure states that are not antidistinguishable with $N$ or fewer copies. In particular, for each $N$, we explicitly construct sets of states that fail to be antidistinguishable with $N$ or fewer copies but become antidistinguishable with $N+1$ copies.
\end{itemize} 

One may observe that our result is very similar to a fundamental result in entanglement distillation: for every positive integer $N$, there exist states that are distillable but not $N$-distillable \cite{watrous2004many}. 

\section{Exclusion of quantum states}
\label{prelim}
In a state exclusion problem, we assume a referee prepares a quantum system in a state $\rho_i$ chosen from the set $\mathcal{S} = \{\rho_1, \rho_2, \ldots, \rho_k\}$ with probability $p_i$ and hands the system to Alice. Alice has complete knowledge of $\mathcal{S}$ but not which state was prepared—her goal is to rule out at least one state that the system was not prepared in. Concretely, she must output a label $j \in \{1,2,\ldots,k\}$ such that $\rho_j \neq \rho_i$. If she can always do this with certainty, we say that the set $\mathcal{S}$ admits state exclusion.

\subsection{Antidistinguishable Sets}
\begin{definition}\label{def:antidistinguishable}
A set $\mathcal{S} = \{\rho_1, \rho_2, \ldots, \rho_k\}$ is \emph{antidistinguishable} if and only if there exists a measurement (POVM) $\mathcal{M} = \{\Pi_j\}_{j=1}^k$ such that
\begin{equation}
\operatorname{Tr}(\rho_j \Pi_j) = 0, \qquad \forall j \in \{1,\ldots,k\},
\label{eq:antidist1}
\end{equation}
and
\begin{equation}
\sum_{i=1}^k \operatorname{Tr}(\rho_i \Pi_j) > 0, \qquad \forall j \in \{1,\ldots,k\}.
\label{eq:antidist2}
\end{equation}
\end{definition}

The meaning of these conditions is straightforward:
\begin{itemize}
    \item Equation~\eqref{eq:antidist1} ensures exclusion: if outcome $j$ occurs, Alice knows
    with certainty that the system was not prepared in state $\rho_j$.
    \item Equation~\eqref{eq:antidist2} implies that all measurement outcomes are relevant—each
    outcome occurs with nonzero probability for at least one state in $\mathcal{S}$. Therefore, all $k$
    outcomes are actually used across different runs of the experiment.
\end{itemize}

Note that if state exclusion is possible, then for every state in the set, there exists at least one measurement outcome that excludes it. This, however, does not imply that each outcome excludes \emph{exactly} one state. An outcome may exclude more than one state; nevertheless, there must exist an assignment of outcomes to states such that each state is excluded by a distinct outcome, satisfying Eq.~\eqref{eq:antidist1}.

The notion of antidistinguishability used in Definition \ref{def:antidistinguishable} provides a natural operational formulation of state exclusion and has been referred to in the literature as \emph{strong antidistinguishability} \cite{stratton2024operational}. While condition \eqref{eq:antidist1} is clearly essential for exclusion, condition \eqref{eq:antidist2} is equally crucial: it ensures that all measurement outcomes are relevant, in the sense that none is redundant or never observed (occurs with zero probability for every state in the set).

A weaker notion, briefly mentioned earlier in the introduction, is referred to as \emph{weak antidistinguishability} \cite{stratton2024operational}, where condition \eqref{eq:antidist1} is required but not \eqref{eq:antidist2}. That is, no requirement is imposed that all outcomes occur with nonzero probability for at least one state. Consequently, some outcomes may be irrelevant or never observed, and the corresponding measurement may be able to exclude only a proper subset of the states in the set. Operationally, this means that while certain states can be conclusively ruled out, others can never be excluded in any experimental run. For this reason, sets that are only weakly antidistinguishable will be regarded as not antidistinguishable.

One may consider variations of the exclusion problem, such as $m$-state exclusion, where each outcome must rule out $m$ candidate states \cite{stratton2024operational}.

\subsection{Known Characterizations}
Unlike state discrimination, a complete characterization of antidistinguishable sets is
not known. Only partial results are available.

\subsubsection*{Necessary condition}
\begin{proposition}[from \cite{bandyopadhyay2014conclusive}]\label{necessary}
A set $\mathcal{S} = \{\rho_1, \rho_2, \ldots, \rho_k\}$ is antidistinguishable \emph{only if}
\begin{equation}
\sum_{i<j=1}^k F(\rho_i,\rho_j) \le \frac{k(k-2)}{2},
\end{equation}
where
\begin{equation}
F(\rho_i,\rho_j) = \operatorname{Tr}\!\left(\sqrt{\sqrt{\rho_i}\,\rho_j\,\sqrt{\rho_i}}\right)
\end{equation}
denotes the fidelity between $\rho_i$ and $\rho_j$.
\end{proposition}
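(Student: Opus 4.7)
The plan is to exploit the Fuchs--Caves variational characterization of the fidelity, which states that for any two states $\rho$ and $\sigma$ and any POVM $\{M_\ell\}$,
\begin{equation}
F(\rho,\sigma)\;\le\;\sum_\ell \sqrt{\operatorname{Tr}(\rho M_\ell)\,\operatorname{Tr}(\sigma M_\ell)},
\end{equation}
with equality attained by an optimal POVM. So the first step is: assume $\mathcal{S}$ is antidistinguishable, pick an antidistinguishing POVM $\{\Pi_\ell\}_{\ell=1}^k$, and define the $k\times k$ matrix of outcome probabilities $p_{i\ell}:=\operatorname{Tr}(\rho_i \Pi_\ell)$. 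By antidistinguishability $p_{\ell\ell}=0$ for every $\ell$, and by normalization $\sum_\ell p_{i\ell}=1$ for every $i$.

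Next, apply the Fuchs--Caves bound to each pair $(\rho_i,\rho_j)$ with the fixed POVM $\{\Pi_\ell\}$ and sum over pairs:
\begin{equation}
\sum_{i<j} F(\rho_i,\rho_j)\;\le\;\sum_\ell \sum_{i<j}\sqrt{p_{i\ell}\,p_{j\ell}}.
\end{equation}
I would then analyze the inner sum for each fixed $\ell$ using the identity
\begin{equation}
\sum_{i<j}\sqrt{p_{i\ell}\,p_{j\ell}}
=\tfrac{1}{2}\!\left[\Big(\textstyle\sum_{i}\sqrt{p_{i\ell}}\Big)^{\!2}-\sum_i p_{i\ell}\right].
\end{equation}
Because $p_{\ell\ell}=0$, the sum $\sum_i\sqrt{p_{i\ell}}$ has only $k-1$ nonzero terms, so Cauchy--Schwarz gives $(\sum_i\sqrt{p_{i\ell}})^2\le (k-1)\sum_i p_{i\ell}$. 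Writing $s_\ell:=\sum_i p_{i\ell}$, this yields
\begin{equation}
\sum_{i<j}\sqrt{p_{i\ell}\,p_{j\ell}}\;\le\;\tfrac{k-2}{2}\,s_\ell.
\end{equation}

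Finally, summing over $\ell$ and using $\sum_\ell s_\ell=\sum_i\sum_\ell p_{i\ell}=k$, I obtain
\begin{equation}
\sum_{i<j} F(\rho_i,\rho_j)\;\le\;\tfrac{k-2}{2}\sum_\ell s_\ell\;=\;\tfrac{k(k-2)}{2},
\end{equation}
which is the claim. The substantive step is spotting the exact place where antidistinguishability enters: it is precisely the vanishing of the diagonal $p_{\ell\ell}$ that reduces the Cauchy--Schwarz factor from $k$ to $k-1$ and thereby produces the factor $k-2$ in the bound. The rest of the argument is a clean bookkeeping exercise once the Fuchs--Caves inequality is in play, so I do not anticipate any obstacle beyond correctly isolating that observation.
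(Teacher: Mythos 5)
The paper does not prove this proposition; it imports it verbatim from Ref.~\cite{bandyopadhyay2014conclusive}, and your argument is in fact the standard proof given there: the Fuchs--Caves measured-fidelity bound applied to the excluding POVM, with the vanishing diagonal $p_{\ell\ell}=0$ feeding into Cauchy--Schwarz to turn the factor $k$ into $k-1$ and hence $k-2$ in the bound. Your derivation is complete and correct --- the inequality direction of the Fuchs--Caves characterization (quantum fidelity is the \emph{minimum} over POVMs of the classical Bhattacharyya overlap), the pair-sum identity, and the normalization $\sum_\ell s_\ell = k$ are all used properly --- so there is nothing to add.
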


If this inequality is violated, the set cannot be antidistinguishable. Note that a set satisfying
the above inequality may still not be antidistinguishable as the condition is only necessary but not
sufficient. For such an example, consider the following set of three qubit states: $\mathcal{S}=\{\ket{0},\ket{1},\frac{1}{2}\ket{0}+\frac{\sqrt{3}}{2}\ket{1}\}$. Although this set satisfies the necessary condition, it is not antidistinguishable \cite{authorComment}. A sufficient condition, however, is known for pure states.

\subsubsection*{Sufficient condition for pure states}

\begin{proposition}[from \cite{Heinosaari_2018}]\label{sufficient}
Let $\mathcal{S} = \{|\psi_1\rangle, |\psi_2\rangle, \ldots, |\psi_k\rangle\}$ be a set of pure states. The set
is antidistinguishable \emph{if} there exist positive coefficients $t_1,\ldots,t_k$ such that
\begin{equation}
\sum_{i=1}^k t_i\,|\psi_i\rangle\!\langle\psi_i| = \mathbb{I}.
\end{equation}
\end{proposition}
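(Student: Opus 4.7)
The plan is to construct the desired POVM explicitly from the given resolution of the identity. Since the exclusion condition $\operatorname{Tr}(\Pi_j |\psi_j\rangle\langle\psi_j|) = 0$ is equivalent to $\Pi_j |\psi_j\rangle = 0$, a natural ansatz is to make each effect proportional to the complementary projector of the corresponding state. I would therefore try
\begin{equation}
\Pi_j = c_j\bigl(\mathbb{I} - |\psi_j\rangle\langle\psi_j|\bigr),
\end{equation}
with positive scalars $c_j$ to be chosen. This ansatz automatically enforces (\ref{eq:antidist1}) and guarantees $\Pi_j \ge 0$, so the only remaining task is to pin down the $c_j$ so that $\sum_j \Pi_j = \mathbb{I}$.

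The second step is to match this requirement with the hypothesis. Summing the ansatz gives
\begin{equation}
\sum_{j=1}^k \Pi_j = \Bigl(\sum_{j=1}^k c_j\Bigr)\mathbb{I} - \sum_{j=1}^k c_j\,|\psi_j\rangle\langle\psi_j|,
\end{equation}
so the obstacle is to make the second term a scalar multiple of $\mathbb{I}$. The hypothesis $\sum_i t_i |\psi_i\rangle\langle\psi_i| = \mathbb{I}$ is precisely what we need: choosing $c_j = \lambda t_j$ for a common positive $\lambda$ collapses that sum to $\lambda\mathbb{I}$. Taking the trace of the hypothesis yields $\sum_i t_i = d$, where $d$ is the dimension of the underlying Hilbert space, and the normalization then forces $\lambda (d-1) = 1$. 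This is the only nontrivial calculation in the argument.

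The third step is to collect everything. Setting $c_j = t_j/(d-1)$ we obtain the explicit POVM
\begin{equation}
\Pi_j = \frac{t_j}{d-1}\bigl(\mathbb{I} - |\psi_j\rangle\langle\psi_j|\bigr),
\end{equation}
and verify directly that $\sum_j \Pi_j = \mathbb{I}$, that each $\Pi_j$ is positive, and that $\Pi_j |\psi_j\rangle = 0$. Condition (\ref{eq:antidist2}) follows from $t_j > 0$ and $\mathbb{I} - |\psi_j\rangle\langle\psi_j| \ne 0$ (which holds whenever $d \ge 2$), so every outcome occurs with nonzero probability on at least one state in $\mathcal{S}$.

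There is no real obstacle beyond spotting the right ansatz; the main insight is that the complementary-projector form is simultaneously the unique natural operator annihilating $|\psi_j\rangle$ and the one that pairs with the hypothesis to produce a completeness relation. One minor point to mention is the trivial case $d=1$, which is excluded because in a one-dimensional Hilbert space $\mathbb{I} - |\psi_j\rangle\langle\psi_j| = 0$ and exclusion is meaningless; for all $d \ge 2$ the construction goes through verbatim.
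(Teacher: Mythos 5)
Your construction is correct, and it is the standard argument for this result; the paper itself does not prove this proposition but imports it verbatim from the cited reference, so there is no in-paper proof to compare against. Your POVM $\Pi_j = \tfrac{t_j}{d-1}\bigl(\mathbb{I} - |\psi_j\rangle\langle\psi_j|\bigr)$ manifestly satisfies positivity, completeness (via $\sum_i t_i = d$), and the exclusion condition of Eq.~(\ref{eq:antidist1}). One small point worth tightening: your justification of Eq.~(\ref{eq:antidist2}) from ``$t_j>0$ and $\mathbb{I} - |\psi_j\rangle\langle\psi_j| \neq 0$'' alone is not quite airtight, since in principle a nonzero effect could still be orthogonal to every state in $\mathcal{S}$ if the states failed to span the space. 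The hypothesis rules this out, and the cleanest way to see it is to use it once more: $\sum_i t_i \operatorname{Tr}(\rho_i \Pi_j) = \operatorname{Tr}(\Pi_j) = t_j > 0$, so at least one summand $\operatorname{Tr}(\rho_i \Pi_j)$ is strictly positive, which gives Eq.~(\ref{eq:antidist2}) directly. Your remark about the degenerate case $d=1$ is apt; for $d\ge 2$ the argument is complete.
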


For qubits, this condition is both necessary and sufficient. In higher dimensions, it remains only a sufficient condition.

\subsubsection*{Necessary and sufficient condition for pure states in higher dimension}

For general Hilbert-space dimension $d$, a necessary and sufficient condition for the antidistinguishability of pure states is known and can be expressed in terms of the Gram matrix of the set.

\begin{proposition}[from \cite{Johnston2025tightbounds}]\label{prop:incoherence_cond}
Let $\mathcal{S} = \{|\psi_1\rangle, |\psi_2\rangle, \ldots, |\psi_k\rangle\} \subset \mathbb{C}^d$ be a set of
$k$ pure states, and let $G \in \mathrm{Pos}(\mathbb{C}^k)$ denote the Gram matrix of $\mathcal{S}$ with entries
$G_{ij} = \langle\psi_i|\psi_j\rangle$, where $\mathrm{Pos}(\mathbb{C}^k)$ denotes the set of positive
semidefinite operators on $\mathbb{C}^k$. The set $\mathcal{S}$ is antidistinguishable \emph{if and only if} $G$ is
$(k-1)$-incoherent, i.e., $G$ admits a decomposition
\begin{equation}
G = \sum_{i=1}^k F_i, \qquad F_i \in \mathrm{Pos}(\mathbb{C}^k),
\end{equation}
such that
\begin{equation}
\langle i|F_i|i\rangle = 0, \qquad \forall i \in \{1,\ldots,k\}.
\end{equation}
\end{proposition}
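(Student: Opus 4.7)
The plan is to exploit the natural correspondence, mediated by the ``state matrix'' $V:\mathbb{C}^k \to \mathbb{C}^d$ with $V|i\rangle = |\psi_i\rangle$, between POVMs on the span of $\mathcal{S}$ and positive decompositions of $G = V^\dagger V$. Both implications reduce to compressing or lifting operators along $V$ while keeping track of how the exclusion constraint $\operatorname{Tr}(\rho_j \Pi_j)=0$ translates into the diagonal condition $\langle j|F_j|j\rangle = 0$.

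For the forward direction, suppose $\mathcal{S}$ is antidistinguishable via a POVM $\{\Pi_j\}_{j=1}^k$. I would set $F_j := V^\dagger \Pi_j V$. Positivity of $\Pi_j$ immediately gives $F_j \ge 0$; completeness $\sum_j \Pi_j = \mathbb{I}$ yields $\sum_j F_j = V^\dagger V = G$; and $\langle j|F_j|j\rangle = \langle\psi_j|\Pi_j|\psi_j\rangle = 0$ by the exclusion property. This is the required $(k-1)$-incoherent decomposition.

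The converse is the substantive direction. Given $G = \sum_j F_j$ with $F_j \ge 0$ and $\langle j|F_j|j\rangle = 0$, the inequality $0\le F_j\le G$ forces each $F_j$ to be supported on $\mathcal{K}:=\operatorname{range}(\sqrt{G})$, where $\sqrt{G}$ is invertible. Using the Moore--Penrose pseudoinverse, I would define
\begin{equation}
\tilde{\Pi}_j := (\sqrt{G})^{+} F_j (\sqrt{G})^{+} \ge 0 \quad \text{on } \mathcal{K}.
\end{equation}
A short computation using $(\sqrt{G})^{+}\sqrt{G} = \sqrt{G}(\sqrt{G})^{+} = P_{\mathcal{K}}$ shows $\sum_j \tilde{\Pi}_j = P_{\mathcal{K}}$, the identity on $\mathcal{K}$. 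To transport this POVM back to the original Hilbert space I would use the fact that $|\phi_i\rangle:=\sqrt{G}|i\rangle$ has the same Gram matrix as $|\psi_i\rangle$, so there is an isometry $U:\operatorname{span}(\mathcal{S}) \to \mathcal{K}$ with $U|\psi_i\rangle = |\phi_i\rangle$. Setting $\Pi_j := U^\dagger \tilde{\Pi}_j U$ and extending to $\mathbb{C}^d$ by absorbing $\mathbb{I}-U^\dagger U$ into, say, $\Pi_1$ (which does not affect any $\langle\psi_j|\Pi_j|\psi_j\rangle$ since all $|\psi_j\rangle$ lie in $\operatorname{span}(\mathcal{S})$) yields a POVM with $\langle\psi_j|\Pi_j|\psi_j\rangle = \langle j|F_j|j\rangle = 0$.

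The main obstacle is bookkeeping rather than conceptual: handling the singular case when $G$ (equivalently $V$) is rank-deficient. The pseudoinverse machinery must be applied on the correct subspace $\mathcal{K}$, and one must verify that every $F_j$ is genuinely supported on $\mathcal{K}$ before inverting---exactly what $F_j \le G$ guarantees. A secondary subtlety is condition~\eqref{eq:antidist2}: if some $F_j=0$ the induced $\Pi_j$ vanishes, and one must argue that the POVM can be perturbed to make every outcome occur with positive probability on at least one state without destroying exclusion, exploiting the freedom to choose positive operators annihilating $|\psi_j\rangle$ whenever the ambient dimension is at least two.
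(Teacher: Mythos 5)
The paper offers no proof of this proposition---it is imported verbatim from Ref.~\cite{Johnston2025tightbounds}---so there is nothing in-paper to compare against; your proposal must stand on its own. Its core does: the dictionary $F_j=V^\dagger\Pi_j V$ for the forward direction, and the reverse construction via $(\sqrt{G})^{+}$ together with the isometry $U$ matching the two realisations of the Gram matrix, is exactly the standard argument, and your handling of the rank-deficient case (noting $0\le F_j\le G$ forces $\mathrm{range}(F_j)\subseteq\mathrm{range}(\sqrt{G})$) is sound. This correctly establishes that $(k-1)$-incoherence of $G$ is equivalent to the existence of a POVM satisfying Eq.~\eqref{eq:antidist1}.

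The genuine gap is your closing paragraph: the claim that Eq.~\eqref{eq:antidist2} can always be restored by perturbing the POVM is false, and the point is not cosmetic, because the paper's Definition~1 includes Eq.~\eqref{eq:antidist2}. Consider the paper's own motivating set $\mathcal{S}=\{\ket{0},\ket{+},\ket{1}\}$, whose Gram matrix admits the $(k-1)$-incoherent decomposition
\begin{equation*}
G=\begin{pmatrix}1&\tfrac{1}{\sqrt2}&0\\[2pt]\tfrac{1}{\sqrt2}&1&\tfrac{1}{\sqrt2}\\[2pt]0&\tfrac{1}{\sqrt2}&1\end{pmatrix}
=\underbrace{\begin{pmatrix}0&0&0\\0&\tfrac12&\tfrac{1}{\sqrt2}\\0&\tfrac{1}{\sqrt2}&1\end{pmatrix}}_{F_1}
+\underbrace{\;0\;}_{F_2}
+\underbrace{\begin{pmatrix}1&\tfrac{1}{\sqrt2}&0\\\tfrac{1}{\sqrt2}&\tfrac12&0\\0&0&0\end{pmatrix}}_{F_3},
\end{equation*}
with $\langle i|F_i|i\rangle=0$ for all $i$. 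Yet the \emph{only} POVM on $\mathbb{C}^2$ satisfying Eq.~\eqref{eq:antidist1} for this set is $\Pi_1=\ket{1}\bra{1}$, $\Pi_2=0$, $\Pi_3=\ket{0}\bra{0}$: any positive operator annihilating $\ket{+}$ is a multiple of $\ket{-}\bra{-}$, and completeness forces that multiple to vanish. Hence Eq.~\eqref{eq:antidist2} fails for $j=2$ and no perturbation whatsoever can repair it---consistent with the paper's assertion that this set is not antidistinguishable. The clean resolution is already latent in your own construction: since $\langle i|F_j|i\rangle=\langle\psi_i|\Pi_j|\psi_i\rangle$, one has $\sum_i\operatorname{Tr}(\rho_i\Pi_j)=\operatorname{Tr}(F_j)$, so Eq.~\eqref{eq:antidist2} holds if and only if every $F_j\neq0$. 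You should therefore either prove the proposition for the weaker notion of antidistinguishability that omits Eq.~\eqref{eq:antidist2} (and delete the perturbation paragraph), or strengthen the incoherence condition to require $F_i\neq0$ for all $i$, in which case your dictionary transports Eq.~\eqref{eq:antidist2} exactly and no perturbation argument is needed. As written, the perturbation step would fail.
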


For a given set of pure states, the condition in Proposition \ref{prop:incoherence_cond} can be efficiently checked using semidefinite programming.

\subsubsection*{Three pure states}

In the special case of three pure states, a simpler and more explicit necessary and sufficient condition is known.

\begin{proposition}[from \cite{caves2002conditions}]\label{prop:three-state-iff}
Let $\mathcal{S} = \{|\psi_1\rangle, |\psi_2\rangle, |\psi_3\rangle\}$. The set is antidistinguishable \emph{if and only if}
\begin{equation}
\sum_{i<j=1}^3 x_{ij} < 1,
\end{equation}
and
\begin{equation}
\left(\sum_{i<j=1}^3 x_{ij} - 1\right)^2 \ge 4\prod_{i<j=1}^3 x_{ij},
\end{equation}
where $x_{ij} \coloneqq |\langle\psi_i|\psi_j\rangle|^2$.
\end{proposition}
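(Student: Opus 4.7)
The plan is to invoke Proposition \ref{prop:incoherence_cond} with $k=3$ and turn antidistinguishability into the feasibility of three bilinear inequalities. By that proposition, $\mathcal{S}$ is antidistinguishable iff the $3\times 3$ Gram matrix $G$ with entries $G_{ij}=\langle\psi_i|\psi_j\rangle$ admits a decomposition $G=F_1+F_2+F_3$ with each $F_i\in\mathrm{Pos}(\mathbb{C}^3)$ and $(F_i)_{ii}=0$. Since a positive semidefinite matrix with a zero diagonal entry must have the entire corresponding row and column equal to zero, each $F_i$ is supported on the $2\times 2$ principal block indexed by $\{1,2,3\}\setminus\{i\}$.

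Comparing $G$ and $F_1+F_2+F_3$ entry-by-entry, each off-diagonal position $(j,k)$ admits a contribution only from $F_i$ with $i\notin\{j,k\}$, so those entries are uniquely pinned to $\langle\psi_j|\psi_k\rangle$. Each unit diagonal $G_{ii}=1$ must be partitioned between the two $F$-matrices touching that row, which I parametrize by $a,b,c\in[0,1]$. The PSD constraint on each $2\times 2$ block then reduces to
\[
a(1-b)\geq x_{12},\qquad (1-a)c\geq x_{13},\qquad b(1-c)\geq x_{23}.
\]
Antidistinguishability is thus equivalent to the feasibility of this system over the unit cube.

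For necessity of the first inequality, I would add the three constraints and use the identity $a(1-b)+(1-a)c+b(1-c)=a+b+c-ab-bc-ca$; a routine Lagrange / boundary analysis on $[0,1]^3$ shows this polynomial is bounded above by $1$, attained only at vertices where two of the $x_{ij}$'s must vanish. For non-degenerate sets (no two states equal) this forces $\sum_{i<j}x_{ij}<1$. For the second inequality I would fix $a$ and saturate the first two constraints by setting $b=1-x_{12}/a$ and $c=x_{13}/(1-a)$, which maximizes $b(1-c)$ over the feasible rectangle. The remaining condition $b(1-c)\geq x_{23}$ becomes
\[
(1-x_{23})\,a^2-(1+x_{12}-x_{13}-x_{23})\,a+x_{12}(1-x_{13})\leq 0,
\]
which must hold for some $a\in[x_{12},1-x_{13}]$. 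A solution exists iff this quadratic has a nonnegative discriminant, and a direct expansion reveals
\[
(1+x_{12}-x_{13}-x_{23})^2-4(1-x_{23})\,x_{12}(1-x_{13}) = \Bigl(\sum_{i<j}x_{ij}-1\Bigr)^{\!2}-4\prod_{i<j}x_{ij},
\]
which reproduces the proposition's second condition. A final check verifies that under $\sum_{i<j}x_{ij}<1$ the vertex of the quadratic lies strictly inside $[x_{12},1-x_{13}]$, so nonnegativity of the discriminant always produces a feasible $a$ in the admissible interval.

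The step I expect to be most delicate is this last algebraic identity: the discriminant, which is manifestly asymmetric in $(x_{12},x_{13},x_{23})$ because the parametrization singled out the variable $a$, collapses into the fully symmetric expression in $\sum x_{ij}$ and $\prod x_{ij}$. A secondary technical nuisance is the careful treatment of strict versus non-strict inequalities at the boundary vertices of $[0,1]^3$, which is what separates the strict $\sum_{i<j}x_{ij}<1$ from the weaker $\sum_{i<j}x_{ij}\leq 1$ one gets from summing the three constraints directly.
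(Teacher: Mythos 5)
The paper does not actually prove this proposition---it imports it from Ref.~\cite{caves2002conditions}---so I can only assess your argument on its own terms. Your reduction via Proposition~\ref{prop:incoherence_cond} to the feasibility of $a(1-b)\ge x_{12}$, $(1-a)c\ge x_{13}$, $b(1-c)\ge x_{23}$ over $[0,1]^3$ is correct, and so is your derivation of the second condition: the quadratic in $a$, the symmetric collapse of its discriminant to $\bigl(\sum_{i<j}x_{ij}-1\bigr)^2-4\prod_{i<j}x_{ij}$, and the location of its vertex all check out. The gap is in the necessity argument for the first condition. Writing $1-(a+b+c-ab-bc-ca)=(1-a)(1-b)(1-c)+abc$, equality $\sum_{i<j}x_{ij}=1$ forces one variable to equal $1$ and one to equal $0$, but this is \emph{not} attained only at vertices and does \emph{not} force two of the $x_{ij}$ to vanish: the point $(a,b,c)=(1,\tfrac12,0)$ gives $x_{12}=\tfrac12$, $x_{13}=0$, $x_{23}=\tfrac12$, which is realized by the non-degenerate set $\{\ket{0},\ket{+},\ket{1}\}$. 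For that set your system is feasible (take $a=1$, $b=\tfrac12$, $c=0$), yet $\sum_{i<j}x_{ij}=1$; so your argument can only deliver $\sum_{i<j}x_{ij}\le 1$, not the strict inequality in the statement.

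This is not a removable technicality. Proposition~\ref{prop:incoherence_cond} characterizes antidistinguishability in the weaker sense that permits vanishing POVM elements (for $\{\ket{0},\ket{+},\ket{1}\}$ the POVM $\{\proj{1},\,0,\,\proj{0}\}$ works, and correspondingly $G=F_1+F_2+F_3$ with $F_2=0$), whereas the strict inequality $\sum_{i<j}x_{ij}<1$ encodes the paper's extra requirement~\eqref{eq:antidist2} that every outcome occur with positive probability. What your argument actually establishes is ``feasible if and only if $\sum_{i<j}x_{ij}\le 1$ and $\bigl(\sum_{i<j}x_{ij}-1\bigr)^2\ge 4\prod_{i<j}x_{ij}$,'' which disagrees with the stated proposition exactly on the boundary $\sum_{i<j}x_{ij}=1$, $\prod_{i<j}x_{ij}=0$. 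To close the gap you would need a starting characterization that enforces condition~\eqref{eq:antidist2} (or a separate treatment of these boundary configurations), rather than Proposition~\ref{prop:incoherence_cond} alone.
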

    
\section{Many-copy Antidistinguishability }\label{results}
\subsection{Motivation}

We now turn to sets of states that are not antidistinguishable in the single-copy scenario. For such sets, it is natural to ask whether access to multiple identical copies of the prepared state can enable exclusion. Consider the set of states
\begin{equation}\label{weak}
\mathcal{S} = \{\,|0\rangle, |+\rangle, |1\rangle\,\},
\end{equation}
where $|+\rangle = \tfrac{1}{\sqrt{2}}(|0\rangle + |1\rangle)$. From Proposition \ref{sufficient}, as well as Proposition \ref{prop:three-state-iff}, it follows that this set is not antidistinguishable.. However, when two copies are provided, the corresponding set
\begin{equation}
\mathcal{S}_2 =
\left\{
|0\rangle^{\otimes 2},\;
|+\rangle^{\otimes 2},\;
|1\rangle^{\otimes 2}
\right\}
\end{equation}
becomes antidistinguishable. Indeed, one can verify that $\mathcal{S}_2$ satisfies both conditions of
Proposition \ref{prop:three-state-iff}. An explicit joint measurement, a $3$-outcome POVM $\mathcal{M} = \{\Pi_i\}_{i=1}^3$,
acting on the two-copy Hilbert space $\mathbb{C}^2 \otimes \mathbb{C}^2$ and satisfying
Eqs.~\eqref{eq:antidist1} and~\eqref{eq:antidist2} is given below: 
\begin{equation} 
\begin{aligned}
    \Pi_1 &=
\begin{pmatrix}
0      & 0      & 0      & 0      \\
0      & 0.3671 & 0.0338 & 0.0991 \\
0      & 0.0338 & 0.3671 & 0.0991 \\
0      & 0.0991 & 0.0991 & 0.8017
\end{pmatrix}, \\
\Pi_2 &= \begin{pmatrix}
0.1983 & -0.0991 & -0.0991 & 0      \\
-0.0991 & 0.2658 & -0.0675 & -0.0991 \\
-0.0991 & -0.0675 & 0.2658 & -0.0991 \\
0      & -0.0991 & -0.0991 & 0.1983
\end{pmatrix}, \\
\Pi_3 &= \begin{pmatrix}
0.8017 & 0.0991 & 0.0991 & 0      \\
0.0991 & 0.3671 & 0.0338 & 0      \\
0.0991 & 0.0338 & 0.3671 & 0      \\
0      & 0      & 0      & 0
\end{pmatrix}.
\end{aligned} 
\end{equation} 
\subsection{Results}
In the above example, a joint measurement on two copies serves the purpose of state exclusion. This suggests that providing more copies should only make exclusion “easier”, which is also intuitively satisfying. The question, however, is whether this intuition always holds in a strong sense: \emph{Does every non-antidistinguishable set become antidistinguishable using only finitely many copies, or do there exist sets for which exclusion remains impossible regardless of how many copies are supplied?}

Our first result shows that any set of three or more pure states is antidistinguishable with finitely many copies.  

\subsubsection{All sets of pure states are many-copy antidistinguishable}

Before stating our first main theorem, we recall a useful sufficient condition based on pairwise
overlaps.

\begin{lemma}[from \cite{Johnston2025tightbounds}]\label{lemma:sufficient}
Let $\mathcal{S} = \{|\psi_1\rangle, |\psi_2\rangle, \ldots, |\psi_k\rangle\}$ be a set of pure states. The set
is antidistinguishable \emph{if}
\begin{equation*}
\bigl|\langle\psi_i|\psi_j\rangle\bigr|
\le \frac{1}{\sqrt{2}}
\sqrt{\frac{k-2}{k-1}},
\qquad \forall\,i \neq j.
\end{equation*}
\end{lemma}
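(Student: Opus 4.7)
The plan is to invoke the Gram-matrix characterisation of antidistinguishability given in Proposition \ref{prop:incoherence_cond} and then exhibit an explicit $(k-1)$-incoherent decomposition of the Gram matrix under the overlap hypothesis. Let $\alpha := \tfrac{1}{\sqrt 2}\sqrt{(k-2)/(k-1)}$, let $G$ be the Gram matrix of $\mathcal S$ (so $G_{ii}=1$ and $|G_{ij}|\le\alpha$ for $i\neq j$), and set $H := G - I_k$. The lemma then reduces to finding Hermitian $F_1,\ldots,F_k\succeq 0$ with $(F_i)_{ii}=0$ and $\sum_i F_i = G$.

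The first step is to propose the symmetric ansatz $F_i := \tfrac{1}{k-2}\,P_i\,(G-\tfrac{1}{k-1}I_k)\,P_i$, with $P_i := I_k - e_ie_i^T$. Using the identity $\sum_i P_i M P_i = (k-2)M + \operatorname{diag}(M)$, valid for any Hermitian $M$ and applied with $M = G - \tfrac{1}{k-1} I_k$, one verifies directly that $\sum_i F_i = G$; and $(F_i)_{ii}=0$ by construction. On the range of $P_i$ the operator $F_i$ equals $\tfrac{1}{k-1}I + \tfrac{1}{k-2}H^{(i)}$, where $H^{(i)}$ is the principal submatrix of $H$ with row and column $i$ deleted, so positivity of $F_i$ is equivalent to the eigenvalue inequality $\lambda_{\min}(H^{(i)}) \ge -\tfrac{k-2}{k-1}$.

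The second step is to prove this eigenvalue inequality, and this is the main obstacle. A naive entry-wise estimate using $|H_{jl}|\le\alpha$ and Cauchy--Schwarz ($\|v\|_1^2\le k-1$ for a unit vector $|v\rangle\in\mathbb C^{k-1}$) yields only the Gershgorin-type bound $\langle v|H^{(i)}|v\rangle \ge -(k-2)\alpha$, which by itself would force the weaker condition $\alpha\le 1/(k-1)$. To recover the lemma's threshold one must also use the global PSD constraint $G\succeq 0$: extending $|v\rangle$ to $|\tilde v\rangle := |v\rangle + \mu e_i\in\mathbb C^k$ and minimising $\langle\tilde v|G|\tilde v\rangle\ge 0$ over $\mu\in\mathbb C$ yields the companion inequality $\langle v|H^{(i)}|v\rangle \ge |c|^2 - 1$, where $c := \sum_{j\neq i} H_{ij} v_j$. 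A careful combination of these two bounds, via a Cauchy--Schwarz step that trades off the contribution from $c$ against the entry-wise estimate, is expected to produce the $\tfrac{1}{\sqrt 2}$ factor in the stated threshold.

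If the symmetric ansatz does not by itself recover the full bound (and indeed it must fail once $\alpha > 1/(k-2)$, since then $H^{(i)}$ can approach $-I$), I would fall back to an asymmetric decomposition that partitions the off-diagonal entries of $G$ among the $F_i$'s according to their sign or phase pattern, adjusting the diagonal contributions to preserve positivity. For instance, in the extremal configuration where some $|\psi_{i_0}\rangle$ is orthogonal to the remaining $k-1$ states, taking $F_{i_0}=G^{(i_0)}$ (embedded into $\mathbb C^{k\times k}$) and $F_i = \tfrac{1}{k-1}\, e_{i_0} e_{i_0}^T$ for $i\ne i_0$ already provides a valid decomposition, suggesting a template for the general asymmetric construction.
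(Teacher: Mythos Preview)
The paper does not itself prove Lemma~\ref{lemma:sufficient}; it is quoted from \cite{Johnston2025tightbounds} and used as a black box in the proof of Theorem~\ref{every_set}. So there is no ``paper's proof'' to compare against, and I evaluate your proposal on its own merits. It has a genuine gap.

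Your symmetric ansatz $F_i=\frac{1}{k-2}P_i\bigl(G-\frac{1}{k-1}I\bigr)P_i$ does satisfy $\sum_i F_i=G$ and $(F_i)_{ii}=0$, and the reduction of positivity to $\lambda_{\min}(H^{(i)})\ge -\tfrac{k-2}{k-1}$ is correct. But that eigenvalue bound is \emph{false} under the hypotheses of the lemma, and your proposed rescue via the global constraint $G\succeq 0$ cannot repair it. Take $k=4$ (so $\alpha=1/\sqrt{3}$) and
\[
G=\begin{pmatrix}1&0&0&0\\0&1&-\tfrac12&-\tfrac12\\0&-\tfrac12&1&-\tfrac12\\0&-\tfrac12&-\tfrac12&1\end{pmatrix},
\]
which is positive semidefinite with all off-diagonal moduli $\le 1/2<\alpha$. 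Here $H^{(1)}$ has eigenvector $(1,1,1)^T/\sqrt{3}$ with eigenvalue $-1$, so $\lambda_{\min}(H^{(1)})=-1<-\tfrac{2}{3}$ and your $F_1$ is indefinite. Moreover, since $G_{1j}=0$ for $j>1$, the quantity $c=\sum_{j\neq 1}G_{1j}v_j$ vanishes, and the Schur-complement bound $\langle v|H^{(1)}|v\rangle\ge|c|^2-1=-1$ is attained exactly. No combination of your two inequalities can beat $-1$ here, so the symmetric ansatz is irrecoverable for this instance. You flag this very configuration as ``extremal'', but it is not a boundary case for your ansatz---it is a counterexample to it.

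That leaves only your asymmetric fallback, which you carry out solely for the degenerate case where one state is orthogonal to the rest, and then gesture at a ``template'' for general overlaps. That template is the entire content of the lemma: how to partition the off-diagonal entries of an arbitrary Gram matrix with $|G_{ij}|\le\alpha$ among the $F_i$ while keeping each block positive semidefinite. Without an explicit construction or bound covering the generic case, the proof is incomplete.
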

Lemma \ref{lemma:sufficient} tells us that if all pairwise overlaps are below a threshold that depends only on $k$, then the set is guaranteed to be antidistinguishable. Although a given set may fail to satisfy this condition in the single-copy setting, in the many-copy setting, it may not because all overlaps decrease simultaneously with the number of copies. Consequently, one expects that for sufficiently many copies the overlaps will eventually fall below the threshold in Lemma \ref{lemma:sufficient}, thereby ensuring antidistinguishability. This leads to first main result---universal ``activation'' of antidistinguishability.

\begin{theorem}\label{every_set}
Every set of $k\geq 3$ pure states that is not antidistinguishable becomes antidistinguishable with a finite number of copies.

\end{theorem}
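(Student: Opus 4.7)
The plan is to apply Lemma~\ref{lemma:sufficient} directly to the $n$-fold tensor-product states, exploiting the multiplicativity of inner products under tensor powers. For each positive integer $n$, form
\begin{equation*}
\mathcal{S}_n \;=\; \{\,|\psi_1\rangle^{\otimes n},\, \ldots,\, |\psi_k\rangle^{\otimes n}\,\}.
\end{equation*}
The pairwise overlaps of $\mathcal{S}_n$ satisfy $|\langle \psi_i^{\otimes n} | \psi_j^{\otimes n}\rangle| = |\langle \psi_i | \psi_j\rangle|^n$, which decay geometrically in $n$ as long as the single-copy overlaps are strictly below $1$.

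Set $c_{ij} := |\langle\psi_i|\psi_j\rangle|$ and $c^{*} := \max_{i \neq j} c_{ij}$. Since the $k$ members of $\mathcal{S}$ are distinct as rays, $c_{ij} < 1$ for every $i \neq j$, so $c^{*} < 1$ and $(c^{*})^n \to 0$. Lemma~\ref{lemma:sufficient} requires only that every pairwise overlap of $\mathcal{S}_n$ lie below the threshold $\tau_k := \tfrac{1}{\sqrt{2}}\sqrt{(k-2)/(k-1)}$, which depends only on the cardinality $k$ and is strictly positive whenever $k \geq 3$. Choosing any
\begin{equation*}
n \;\geq\; \left\lceil \frac{\log\!\big(2(k-1)/(k-2)\big)}{2\log(1/c^{*})} \right\rceil
\end{equation*}
guarantees $(c^{*})^n \le \tau_k$, and hence the hypothesis of Lemma~\ref{lemma:sufficient} holds for $\mathcal{S}_n$, yielding antidistinguishability on the $n$-copy space.

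The argument reduces to a one-line application of the sufficient pairwise-overlap condition, so the main obstacle is conceptual rather than technical: one must confirm that the threshold $\tau_k$ in Lemma~\ref{lemma:sufficient} is independent of the ambient Hilbert-space dimension and of the copy number $n$, so that it continues to apply on the larger space $(\mathbb{C}^d)^{\otimes n}$; this is immediate from the statement of the lemma. The bound above is in general far from tight, because Lemma~\ref{lemma:sufficient} is only sufficient and does not exploit any special structure of $\mathcal{S}$; the sharpenings announced in the paper for structured ensembles---equal real pairwise overlaps, or three states with equal overlaps---are precisely what is needed to obtain exact copy counts in those special cases.
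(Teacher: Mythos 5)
Your proof is correct and follows essentially the same route as the paper's: take the maximum pairwise overlap $c^{*}<1$, use multiplicativity of overlaps under tensor powers, and choose $n$ large enough that $(c^{*})^{n}$ falls below the cardinality-dependent threshold of Lemma~\ref{lemma:sufficient}; your explicit bound is algebraically identical to the paper's Eq.~\eqref{eq:bound}. The only cosmetic difference is that the paper first dispatches the case $c^{*}\le t_k$ (which also covers $c^{*}=0$, where your logarithmic formula degenerates) before writing down the copy-number bound.
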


\begin{proof}
    Let $\mathcal{S}=\{\ket{\psi_1},\ket{\psi_2},\ldots,\ket{\psi_k}\}$ be a set of $k\geq 3$ pure states, and define
    \begin{align}
        c \coloneq \max_{i\neq j} \abs{\bra{\psi_i}\ket{\psi_j}} \in [0,1).
    \end{align} If $c \leq \frac{1}{\sqrt{2}}\sqrt{\frac{k-2}{k-1}}$, then $\mathcal{S}$ is already antidistinguishable (single copy), by Lemma~\ref{lemma:sufficient}.

    Suppose instead that
    \begin{align}
        c > \frac{1}{\sqrt{2}}\sqrt{\frac{k-2}{k-1}}
    \end{align} and consider the $N$-copy exclusion problem corresponding to the set $\mathcal{S}_N = \left\{\ket{\psi_1}^{\otimes N},\ldots,\ket{\psi_k}^{\otimes N}\right\}$. The pairwise overlaps satisfy
    \begin{align}
            \abs{\braket{\psi_i^{\otimes N}}{\psi_j^{\otimes N}}}=\abs{\braket{\psi_i}{\psi_j}}^N \leq c^N.
    \end{align}
    Since $0\le c<1$, the sequence $c^N$ decreases monotonically with $N$. In particular, if
    \begin{align}
        N \ge \left\lceil \frac{\ln t_k}{\ln c} \right\rceil,
        \quad \text{with }
        t_k = \frac{1}{\sqrt{2}}\sqrt{\frac{k-2}{k-1}}, \label{eq:bound}
    \end{align}
    then $c^N \le t_k$, and hence
    \begin{align}
        \abs{\bra{\psi_i^{\otimes N}}\ket{\psi_j^{\otimes N}}} \leq t_k, \quad \forall\, i\neq j.
    \end{align}
    By Lemma~\ref{lemma:sufficient}, the set $\mathcal{S}_N$ is antidistinguishable. This completes the proof.
\end{proof}

This result holds for any set of cardinality $k\geq 3$ and for any Hilbert space dimension $d$.

Notably, Eq.~\eqref{eq:bound} provides an upper bound on the number of copies required for a given set to be many-copy antidistinguishable. However, since this bound follows from a sufficient condition, Theorem~\ref{every_set} does not specify the minimum number of copies for which a given set $\mathcal{S}$ becomes antidistinguishable. This motivates the following definition:
\begin{definition}[$N$-copy antidistinguishability]\label{N-copy-AD}
    Fix a natural number $N$. A set of quantum states $\mathcal{S} = \{\rho_1, \rho_2, \ldots, \rho_k\}$ is said to be \emph{$N$-copy antidistinguishable} if it is antidistinguishable with $N$ copies but not fewer.
\end{definition}
The above definition implies that if $\mathcal{S}$ is $N$-copy antidistinguishable, then the set $\mathcal{S}_{N'}=\{\rho_1^{\otimes N'}, \rho_2^{\otimes N'}, \ldots, \rho_k^{\otimes N'}\}$ is not antidistinguishable for any $N' < N$. It is also easy to see that an $N$-copy antidistinguishable set is also antidistinguishable with $N'$ copies, for all $N' \ge N$. In the following, we show that for certain classes of state sets---such as those with equal and real pairwise inner products--- the required number of copies $N$ can be determined exactly. As expected, $N$ depends both on the inner product and on the number of states in the set.

\subsubsection{$N$-copy antidistinguishability of states with equal and real pairwise inner products}

We first recall the following result.

\begin{lemma}[from \cite{Johnston2025tightbounds}]\label{neccesary_and_sufficient}
    Let $\mathcal{S}=\{\ket{\psi_1},\ket{\psi_2},\cdots,\ket{\psi_k}\}$ be a set of pure states satisfying
    \begin{align}
        \braket{\psi_i}{\psi_j}=\gamma,\quad \forall~i\neq j,
    \end{align}
    where $0\leq\gamma<1$. The set $\mathcal{S}$ is antidistinguishable \emph{if and only if}
    \begin{align}
        \gamma \leq \frac{k-2}{k-1}.
    \end{align}
\end{lemma}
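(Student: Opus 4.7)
The plan is to establish the two directions separately, using the fidelity-based necessary condition of Proposition~\ref{necessary} for the forward implication and the Gram-matrix $(k-1)$-incoherence criterion of Proposition~\ref{prop:incoherence_cond} for the converse. Both ends turn out to saturate exactly at the threshold $(k-2)/(k-1)$, which is why this value is sharp.

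For the forward direction, suppose $\mathcal{S}$ is antidistinguishable. Because the states are pure with real nonnegative common overlap $\gamma$, each pairwise fidelity is simply $F(\psi_i,\psi_j) = |\braket{\psi_i}{\psi_j}| = \gamma$. Substituting the sum of $\binom{k}{2}$ equal fidelities into the inequality of Proposition~\ref{necessary} gives $\binom{k}{2}\gamma \le k(k-2)/2$, which rearranges to the claimed bound $\gamma \le (k-2)/(k-1)$. This part is essentially a one-line computation.

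For the converse, assume $\gamma \le (k-2)/(k-1)$. The Gram matrix of $\mathcal{S}$ is $G = (1-\gamma)\mathbb{I}_k + \gamma J$, where $J$ denotes the $k\times k$ all-ones matrix. By Proposition~\ref{prop:incoherence_cond}, it suffices to write $G = \sum_{i=1}^k F_i$ with $F_i \succeq 0$ and $\langle i|F_i|i\rangle = 0$. Taking the standard basis $\{\ket{e_i}\}_{i=1}^k$ of $\mathbb{C}^k$, I propose the symmetric two-term ansatz
\begin{equation*}
F_i \;=\; c^2\,\ket{f_i}\!\bra{f_i} \;+\; d\sum_{j\ne i}\ket{e_j}\!\bra{e_j},
\qquad \ket{f_i} \;\coloneqq\; \sum_{j\ne i}\ket{e_j}.
\end{equation*}
By construction row and column $i$ of $F_i$ vanish, so $\langle i|F_i|i\rangle = 0$, and $F_i \succeq 0$ reduces to $c^2,d \ge 0$ (the nontrivial $(k{-}1)\times(k{-}1)$ block $c^2 J_{k-1} + d\,\mathbb{I}_{k-1}$ has eigenvalues $d$ and $d+(k-1)c^2$). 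Summing over $i$ one gets $(k-1)(c^2+d)$ on the diagonal and $(k-2)c^2$ off the diagonal; matching to $G$ forces $c^2 = \gamma/(k-2)$ and $d = 1/(k-1) - \gamma/(k-2)$. The hypothesis $\gamma \le (k-2)/(k-1)$ is precisely the condition $d \ge 0$, closing the loop.

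The main obstacle, and the only part requiring genuine thought, is choosing this decomposition. A single-term rank-one ansatz $F_i \propto \ket{f_i}\bra{f_i}$ rigidly locks the diagonal-to-off-diagonal ratio of $\sum_i F_i$ at $(k-1):(k-2)$, which only reproduces the boundary case $\gamma = (k-2)/(k-1)$. The fix is to allow one extra symmetric degree of freedom that lifts the diagonal without touching off-diagonals and without violating $\langle i|F_i|i\rangle = 0$; a diagonal correction supported on $\{\ket{e_j} : j \ne i\}$ is the minimal such perturbation, and its positivity is what produces the sharp bound.
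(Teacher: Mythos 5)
Your proof is correct, but note that the paper does not actually prove this lemma: it is imported verbatim as a cited result from Ref.~\cite{Johnston2025tightbounds}, so there is no in-paper argument to compare against. What you have done is give a self-contained derivation from two other results the paper does state, and both halves check out. The forward direction is exactly the fidelity bound of Proposition~\ref{necessary} applied to $\binom{k}{2}$ equal pure-state fidelities $F(\psi_i,\psi_j)=|\langle\psi_i|\psi_j\rangle|=\gamma$, which rearranges to $\gamma\le (k-2)/(k-1)$. For the converse, your two-term ansatz is verified easily: each $F_i$ is supported off the $i$-th coordinate, its nonzero block $c^2 J_{k-1}+d\,\mathbb{I}_{k-1}$ has eigenvalues $d$ and $d+(k-1)c^2$, and the sum has diagonal $(k-1)(c^2+d)$ and off-diagonal $(k-2)c^2$; matching to $G=(1-\gamma)\mathbb{I}_k+\gamma J$ forces $c^2=\gamma/(k-2)$ and $d=\tfrac{1}{k-1}-\tfrac{\gamma}{k-2}$, whose nonnegativity is precisely the claimed threshold, so Proposition~\ref{prop:incoherence_cond} applies. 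Your closing observation that the rank-one ansatz alone pins the ratio at the boundary case, and that the diagonal correction is what opens up the full interval, is a genuinely illuminating explanation of why the bound is sharp. The only caveat is the implicit assumption $k\ge 3$ (the division by $k-2$), which matches the paper's standing hypothesis but should be stated; for $k=2$ the lemma degenerates to $\gamma=0$ and the decomposition must be taken as $c=0$, $d=1$ directly.
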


Let us now suppose that the given set $\mathcal{S}$ is not antidistinguishable, which implies that for such a set $\gamma > \frac{k-2}{k-1}$. We now derive the exact minimum number of copies required for $\mathcal{S}$ to become many-copy antidistinguishable.

\begin{theorem}\label{tight}
Let $\mathcal{S} = \{ \ket{\psi_1}, \ket{\psi_2}, \ldots, \ket{\psi_k} \}$ be a set of $k \ge 3$ pure states
with equal and real inner product $\gamma\in \left(\frac{k-2}{k-1},1\right)$.
The set $\mathcal{S}$ is $N$-copy antidistinguishable for
\begin{equation}
N = \left\lceil \frac{\ln s_k}{\ln \gamma} \right\rceil ,
\quad \text{where } s_k := \frac{k-2}{k-1}.
\end{equation}
\end{theorem}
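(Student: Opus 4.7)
The plan is to reduce everything directly to Lemma \ref{neccesary_and_sufficient}, since the many-copy set $\mathcal{S}_N = \{|\psi_1\rangle^{\otimes N}, \ldots, |\psi_k\rangle^{\otimes N}\}$ inherits the structural property that makes that lemma applicable. Specifically, because the single-copy inner products $\langle \psi_i | \psi_j \rangle = \gamma$ are equal and real, the $N$-copy inner products are
\begin{equation}
\langle \psi_i^{\otimes N} | \psi_j^{\otimes N} \rangle = \gamma^N, \qquad \forall\, i \neq j,
\end{equation}
and these are again equal and real (and lie in $[0,1)$). Thus $\mathcal{S}_N$ is precisely of the form to which Lemma \ref{neccesary_and_sufficient} applies, with parameter $\gamma^N$ in place of $\gamma$.

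By that lemma, $\mathcal{S}_N$ is antidistinguishable if and only if $\gamma^N \le s_k$. The task therefore reduces to finding the smallest positive integer $N$ for which this inequality holds. Taking logarithms (and using that $\ln \gamma < 0$, so the inequality flips), this is equivalent to
\begin{equation}
N \ge \frac{\ln s_k}{\ln \gamma},
\end{equation}
so the minimum such integer is $N = \lceil \ln s_k / \ln \gamma \rceil$. Note that the hypothesis $\gamma > s_k$ forces $\ln s_k / \ln \gamma > 1$, so $N \ge 2$, consistent with the assumption that $\mathcal{S}$ itself (the one-copy set) is not antidistinguishable.

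It remains to verify the ``not fewer'' part of Definition \ref{N-copy-AD}: for every $N' < N$, the set $\mathcal{S}_{N'}$ is not antidistinguishable. By definition of the ceiling, any $N' < N$ satisfies $N' < \ln s_k / \ln \gamma$, which (again reversing the inequality on dividing by the negative quantity $\ln \gamma$) gives $\gamma^{N'} > s_k$. Applying the ``only if'' direction of Lemma \ref{neccesary_and_sufficient} to $\mathcal{S}_{N'}$, whose pairwise overlaps all equal $\gamma^{N'}$, shows that $\mathcal{S}_{N'}$ fails to be antidistinguishable. This establishes both clauses of Definition \ref{N-copy-AD} and completes the proof.

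There is no real obstacle here: the argument is essentially a one-line application of Lemma \ref{neccesary_and_sufficient} together with a monotonicity observation. The only point that needs a little care is the sign bookkeeping when solving $\gamma^N \le s_k$ for $N$, since both $\ln \gamma$ and $\ln s_k$ are negative; handling that correctly is what yields the ceiling expression with the stated ratio.
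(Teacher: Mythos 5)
Your proof is correct and follows essentially the same route as the paper's: both reduce the $N'$-copy problem to Lemma~\ref{neccesary_and_sufficient} via the observation that the tensor-power states have equal, real pairwise inner products $\gamma^{N'}$, then identify the smallest $N$ with $\gamma^{N}\le s_k$ and use the ``only if'' direction for $N'<N$. Your explicit sign bookkeeping for the logarithms and the remark that $N\ge 2$ are minor additions, not a different argument.
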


\begin{proof}
Consider the $N'$-copy set
\begin{equation}
\mathcal{S}_{N'} =
\left\{
\ket{\psi_1}^{\otimes N'},
\ket{\psi_2}^{\otimes N'},
\cdots,
\ket{\psi_k}^{\otimes N'}
\right\},
\quad \text{where } N' \ge 2.
\end{equation}
The pairwise inner products of the $N'$-copy states satisfy
\begin{equation}
\braket{\psi_i^{\otimes N'}}{\psi_j^{\otimes N'}}
= \braket{\psi_i}{\psi_j}^{N'}
= \gamma^{N'}
\quad \forall\, i \ne j.
\end{equation}
Since $\gamma \in \left(\frac{k-2}{k-1},\,1\right)$, it follows that
\begin{equation}
\gamma^{N'} \le \frac{k-2}{k-1}
\quad \text{whenever} \quad
N' \ge N = \left\lceil \frac{\ln s_k}{\ln \gamma} \right\rceil.
\end{equation}
Therefore, by Lemma~\ref{neccesary_and_sufficient}, the set $\mathcal{S}_{N'}$ is antidistinguishable for all
$N' \ge N$.

On the other hand, for any $N' < N$, we have $\gamma^{N'} > \frac{k-2}{k-1}$, and hence
Lemma~\ref{neccesary_and_sufficient} implies that $\mathcal{S}_{N'}$ is not antidistinguishable. This completes the proof.
\end{proof}

\begin{remark}
Since $t_k := \sqrt{\frac{k-2}{2(k-1)}} < s_k$ and $\ln \gamma < 0$, it follows that
$\left\lceil \frac{\ln s_k}{\ln \gamma} \right\rceil
\le
\left\lceil \frac{\ln t_k}{\ln \gamma} \right\rceil$.
Hence, for sets with equal and real inner products, antidistinguishability
can be activated using strictly fewer copies than those predicted by the general
bound in Eq.~\eqref{eq:bound}, which is also necessary.
\end{remark}

\subsubsection{Arbitrarily many copies may be required for activating antidistinguishability}

Theorem~\ref{every_set} guarantees that antidistinguishability can always be activated
given a sufficient number of copies. This naturally raises the question of
whether there exists a finite number $N$ such that any set of three or more
pure states becomes antidistinguishable with at most $N$ copies. The
following theorem answers this question in the negative.

We begin with a lemma from Ref.~\cite{Johnston2025tightbounds}, followed by our theorem.

\begin{lemma}[from \cite{Johnston2025tightbounds}]\label{necessary_lemma}
    A set $\mathcal{S}=\{\ket{\psi_1},\cdots,\ket{\psi_k}\}$ of $k\geq 3$ pure states is \emph{not antidistinguishable} if 
    \begin{align}
       \abs{\bra{\psi_i}\ket{\psi_j}}  > \frac{k-2}{k-1}, \quad \forall~i\neq j.
    \end{align}
\end{lemma}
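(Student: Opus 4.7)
The plan is to derive Lemma~\ref{necessary_lemma} as a one-step contrapositive consequence of the fidelity-based necessary condition already stated in Proposition~\ref{necessary}. That proposition says that any antidistinguishable set must satisfy
\begin{equation*}
\sum_{i<j} F(\rho_i,\rho_j) \;\le\; \frac{k(k-2)}{2},
\end{equation*}
so it suffices to show that the hypothesis $|\bra{\psi_i}\ket{\psi_j}|>(k-2)/(k-1)$ for every $i\neq j$ forces the left-hand side to strictly exceed this threshold, thereby ruling out antidistinguishability.

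The first thing I would do is pin down the fidelity convention. From the formula $F(\rho,\sigma)=\operatorname{Tr}\sqrt{\sqrt{\rho}\,\sigma\,\sqrt{\rho}}$ in Proposition~\ref{necessary}, a one-line calculation using the fact that a rank-one projector is its own square root gives $\sqrt{\rho}\,\sigma\,\sqrt{\rho}=|\bra{\psi}\ket{\phi}|^{2}\,\ket{\psi}\bra{\psi}$, whose operator square root is $|\bra{\psi}\ket{\phi}|\,\ket{\psi}\bra{\psi}$, and hence $F(\ket{\psi}\bra{\psi},\ket{\phi}\bra{\phi})=|\bra{\psi}\ket{\phi}|$ (not its square). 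With this in hand I would then simply sum the assumed lower bound over the $\binom{k}{2}=k(k-1)/2$ unordered pairs:
\begin{equation*}
\sum_{i<j} F(\rho_i,\rho_j)
\;=\; \sum_{i<j} |\bra{\psi_i}\ket{\psi_j}|
\;>\; \binom{k}{2}\cdot\frac{k-2}{k-1}
\;=\; \frac{k(k-2)}{2},
\end{equation*}
which is a strict violation of Proposition~\ref{necessary}. Consequently $\mathcal{S}$ cannot be antidistinguishable.

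There is no serious obstacle here, and it is worth flagging why: the threshold $(k-2)/(k-1)$ in the hypothesis of Lemma~\ref{necessary_lemma} is exactly the value at which a uniform lower bound on pairwise overlaps \emph{saturates} the bound of Proposition~\ref{necessary} (the same value that appears in the tight condition of Lemma~\ref{neccesary_and_sufficient}), so the argument is essentially an arithmetic check once the convention issue is settled. The only subtle point is that the hypothesis must be a \emph{strict} inequality, since the strictness is precisely what turns the ``$\le$'' in Proposition~\ref{necessary} into a contradiction rather than a boundary case. An alternative but much heavier route, should one wish to avoid Proposition~\ref{necessary}, would be to invoke the Gram-matrix characterization of Proposition~\ref{prop:incoherence_cond} and argue directly that no $(k-1)$-incoherent decomposition of $G$ can exist when every off-diagonal entry has magnitude above $(k-2)/(k-1)$; but for the present lemma the fidelity-sum argument is shorter and self-contained.
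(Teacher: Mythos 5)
Your argument is correct. Note first that the paper itself offers no proof of this lemma: it is imported verbatim from Ref.~\cite{Johnston2025tightbounds}, so there is no ``paper proof'' to match against. What you have done is supply a short, self-contained derivation from Proposition~\ref{necessary}, which is also stated in the paper (from a different reference). The key steps all check out: with the convention $F(\rho,\sigma)=\operatorname{Tr}\sqrt{\sqrt{\rho}\,\sigma\,\sqrt{\rho}}$ one indeed gets $F=\abs{\braket{\psi_i}{\psi_j}}$ (not its square) for pure states, and the sum over the $\binom{k}{2}$ pairs of the strict lower bound $(k-2)/(k-1)$ gives exactly $\binom{k}{2}\cdot\frac{k-2}{k-1}=\frac{k(k-2)}{2}$, a strict violation of the necessary condition. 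Your observation that the threshold $(k-2)/(k-1)$ is precisely the saturation point of Proposition~\ref{necessary} under a uniform overlap bound, and that it matches the tight threshold of Lemma~\ref{neccesary_and_sufficient}, is a genuine added insight that the paper does not spell out. The only caveat worth recording is that your proof establishes the lemma as stated in this paper; whether Ref.~\cite{Johnston2025tightbounds} proves it by the same fidelity-sum route or via the Gram-matrix machinery of Proposition~\ref{prop:incoherence_cond} cannot be judged from the source given, but for the purposes of this paper your derivation is complete and uses only results already quoted in it.
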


\begin{theorem}\label{every_N}
    For every natural number $N$, there exist sets of pure states that are not $N$-copy antidistinguishable.
\end{theorem}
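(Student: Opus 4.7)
The plan is to turn Lemma~\ref{necessary_lemma} into an explicit construction. That lemma gives a clean sufficient condition for non-antidistinguishability: if every pairwise overlap of the set exceeds $s_k := (k-2)/(k-1)$, then the set is not antidistinguishable. So given $N$, it suffices to produce $k \ge 3$ unit vectors whose pairwise inner products $\gamma$ satisfy $\gamma^N > s_k$, because then the $N$-copy overlaps $\gamma^N$ still lie strictly above the threshold and the lemma applies to $\mathcal{S}_N$. Since $\gamma^{N'}$ is decreasing in $N'$, the same conclusion automatically covers every $N' \le N$.

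First I would fix $k = 3$ (so that $s_k = 1/2$) and pick any real $\gamma$ in the interval $(s_k^{1/N},\,1)$, which is non-empty because $0 < s_k < 1$ forces $s_k^{1/N} < 1$. Then I would exhibit $k$ pure states with equal real pairwise inner product $\gamma$; a convenient explicit choice is, in $\mathbb{C}^{k+1}$ with orthonormal basis $\{|0\rangle,|e_1\rangle,\ldots,|e_k\rangle\}$,
\[
|\psi_i\rangle \;=\; \sqrt{\gamma}\,|0\rangle + \sqrt{1-\gamma}\,|e_i\rangle, \qquad i = 1, \ldots, k,
\]
so that $\langle \psi_i | \psi_j \rangle = \gamma$ for all $i \ne j$. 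By construction, for every $N' \le N$ the $N'$-copy pairwise overlaps satisfy $\gamma^{N'} \ge \gamma^N > s_k$, so Lemma~\ref{necessary_lemma} rules out antidistinguishability of $\mathcal{S}_{N'}$. This already delivers the statement of Theorem~\ref{every_N}.

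To match the stronger claim announced in the introduction---that the same set becomes antidistinguishable exactly at $N+1$ copies---I would simply tighten the choice to $\gamma \in \left(s_k^{1/N},\,s_k^{1/(N+1)}\right]$. This interval is non-empty because $0 < s_k < 1$ makes the map $M \mapsto s_k^{1/M}$ strictly increasing in $M$. For any such $\gamma$ one has $\lceil \ln s_k / \ln \gamma \rceil = N+1$, and Theorem~\ref{tight} then immediately gives that $\mathcal{S}$ is $(N+1)$-copy antidistinguishable, i.e.\ fails to be antidistinguishable with $N$ or fewer copies but succeeds with $N+1$ copies.

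The argument is essentially a direct construction; there is no conceptual obstacle. The only point of care is checking non-emptiness of the $\gamma$-interval, which reduces to the trivial monotonicity of $s_k^{1/M}$ in $M$ for $0 < s_k < 1$. The only mild subtlety worth flagging is that although one can take $k = 3$ throughout, the same recipe works verbatim for any $k \ge 3$, producing a family of witnesses in every cardinality.
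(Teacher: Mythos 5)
Your proposal is correct and follows essentially the same route as the paper: choose $k\ge 3$ equiangular states with common overlap $\gamma$ strictly between $\bigl(\tfrac{k-2}{k-1}\bigr)^{1/N}$ and $1$, note that the $N'$-copy overlaps $\gamma^{N'}\ge\gamma^{N}>\tfrac{k-2}{k-1}$ for all $N'\le N$, and invoke Lemma~\ref{necessary_lemma}. Your explicit vectors $\sqrt{\gamma}\,|0\rangle+\sqrt{1-\gamma}\,|e_i\rangle$ replace the paper's citation for the existence of such sets, and your tightened choice $\gamma\in\bigl(s_k^{1/N},\,s_k^{1/(N+1)}\bigr]$ additionally recovers the paper's separate Theorem~\ref{N_N+1}; both are fine.
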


\begin{proof}
   
  Fix $N\in \mathbb{N}$  and consider a set $\mathcal{S}=\{\ket{\psi_1},\cdots,\ket{\psi_k}\} \subset \mathbb{C}^d$ of $k\geq3$ states with equal pairwise overlaps
    \begin{align} \label{eq:equal-overlap}
\abs{\bra{\psi_i}\ket{\psi_j}}&=\gamma:=\frac{1+a}{2}, \quad i\neq j\\
        \text{where}\quad a&:=\left(\frac{k-2}{k-1}\right)^{1/N}.
    \end{align}
    Such sets exist whenever $d\geq k$, because for these dimensions there exist sets of $k$ pure states with equal pairwise overlaps $\gamma$ for any $\gamma\in(0,1)$ \cite{sustik2007existence, waldron2018introduction} (See also Remark \ref{remark}).
    
Since $0<a<1$, and  $a<\gamma<1$, we have
    \begin{align}
        \gamma^N>a^N = \frac{k-2}{k-1}.
    \end{align}

  For the $N'$-copy set $\mathcal{S}_{N'}=\left\{\ket{\psi_1}^{\otimes N'},\cdots,\ket{\psi_k}^{\otimes N'}\right\}$, the pairwise overlaps satisfy
    \begin{align}
        \abs{\bra{\psi_i^{\otimes N'}}\ket{\psi_j^{\otimes N'}}}
        =\abs{\bra{\psi_i}\ket{\psi_j}}^{N'}
        =\gamma^{N'}
    \end{align}
    and therefore for any $N'\leq N$,
    \begin{align}
        \gamma^{N'}\geq \gamma^{N}> \frac{k-2}{k-1}.
    \end{align}
    
  By Lemma~\ref{necessary_lemma}, $\mathcal{S}_{N'}$ is therefore not antidistinguishable for any $N' \leq N$.
\end{proof}

\begin{remark}\label{remark}
     For $k > d$, the Welch bound \cite{welch2003lower} imposes a lower bound on $\gamma$ through the relation \begin{equation*}
        \gamma^2 \geq \frac{k-d}{d(k-1)}
    \end{equation*} although the question of existence of such sets of equiangular lines for all $d$ is not yet resolved. The Gerzon's bound imposes an upper bound on $k$ by stating that $k$ equiangular lines \cite{renes2004symmetric, scott2010symmetric, appleby2007symmetric} in $\mathbb{C}^d$ must satisfy $k\leq d^2$ \cite{lemmens1973equiangular}. Recently a constructive proof for existence of such sets in the $k=d^2$ case (known as the Zauner's conjecture) has been proposed \cite{appleby2025constructive}. 
\end{remark} 

So far we have seen that a set of pure states, assuming such a set is not single-copy antidistinguishable, nevertheless becomes antindistinguishable with sufficiently many copies---and that the number of copies, though finite, could be arbitrarily large. Our next result in some sense brings these two results together. We show that for every natural number $N$ there exist sets of states for all values of cardinality $\geq3$ that are $(N+1)$-copy antidistinguishable. 

\begin{theorem}\label{N_N+1}
    For every positive integer $N$, there exist sets of pure states that are $(N + 1)$-copy antidistinguishable.
\end{theorem}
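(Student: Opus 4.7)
The plan is to adapt the construction from Theorem~\ref{every_N} by tuning the common pairwise inner product so that $N+1$ copies are \emph{exactly} what is needed. Fix $k\ge 3$ and write $s_k = \frac{k-2}{k-1}$. I want a set $\mathcal{S}$ of $k$ pure states with equal real inner product $\gamma\in(0,1)$ satisfying $\gamma^N > s_k$ and $\gamma^{N+1}\le s_k$. An explicit choice that meets both inequalities is
\begin{equation}
\gamma \;=\; s_k^{1/(N+1)},
\end{equation}
since then $\gamma^{N+1}=s_k$, while $\gamma^{N}=s_k^{N/(N+1)}>s_k$ because $s_k<1$ and $N/(N+1)<1$. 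Any $\gamma$ in the half-open interval $\bigl(s_k^{1/N},\,s_k^{1/(N+1)}\bigr]$ would do equally well.

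The argument then proceeds in three short steps. First, I would invoke existence of a set of $k$ unit vectors with Gram matrix $(1-\gamma)I+\gamma J$ (where $J$ is the all-ones matrix). Its eigenvalues $1-\gamma$ and $1+(k-1)\gamma$ are both strictly positive for $\gamma\in[0,1)$, so this Gram matrix is positive semidefinite of rank $k$ and is realized by $k$ unit vectors in $\mathbb{C}^{k}$; a concrete realization is $\ket{\psi_i}=\sqrt{\gamma}\,\ket{v}+\sqrt{1-\gamma}\,\ket{e_i}$ with $\{\ket{e_i}\}_{i=1}^{k}$ orthonormal and orthogonal to $\ket{v}$. Second, since the original inner products are real, the $N'$-copy states inherit equal real inner product $\gamma^{N'}$. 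For the chosen $\gamma$, we have $\gamma^{N+1}=s_k=\frac{k-2}{k-1}$, so Lemma~\ref{neccesary_and_sufficient} applies and $\mathcal{S}_{N+1}$ is antidistinguishable. Third, for every $N'\le N$ one has $\gamma^{N'}\ge \gamma^{N}>s_k$, and the same lemma (its "only if" direction) shows that $\mathcal{S}_{N'}$ is \emph{not} antidistinguishable. Combining these, $\mathcal{S}$ is $(N+1)$-copy antidistinguishable in the sense of Definition~\ref{N-copy-AD}.

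There is no real obstacle: the argument piggybacks on Lemma~\ref{neccesary_and_sufficient}, which is an iff statement for equiangular sets with real overlaps, and the strict monotonicity of $\gamma\mapsto\gamma^{N'}$ on $(0,1)$ converts the single-parameter tuning into the desired two-sided conclusion. The only mildly delicate point is to confirm that a set of $k$ pure states with the chosen equal real inner product actually exists for every $k\ge 3$, but this is immediate from the positive semidefiniteness of $(1-\gamma)I+\gamma J$ for $\gamma\in[0,1)$ (alternatively cited from the equiangular-lines references already used in Theorem~\ref{every_N}). Since the construction works verbatim for every $k\ge 3$, the theorem in fact yields $(N+1)$-copy antidistinguishable sets of every cardinality $k\ge 3$.
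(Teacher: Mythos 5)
Your proposal is correct and follows essentially the same route as the paper: construct $k\ge 3$ equiangular states with a common real inner product $\gamma$ tuned so that $\gamma^{N}>\frac{k-2}{k-1}\ge\gamma^{N+1}$, then apply the iff condition of Lemma~\ref{neccesary_and_sufficient} in both directions. The only differences are cosmetic — you take the boundary value $\gamma=s_k^{1/(N+1)}$ (the paper picks an interior exponent $\tfrac{2N+1}{2N(N+1)}$) and you supply an explicit Gram-matrix realization instead of citing one — and you are slightly more careful in explicitly ruling out all $N'\le N$ rather than only $N'=N$.
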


\begin{proof}
   The proof is constructive. Fix a positive integer $N$, and consider a set
$\mathcal{S}=\{\ket{\psi_1},\allowbreak\ket{\psi_2},\allowbreak\cdots,\allowbreak\ket{\psi_k}\}\subset \mathbb{C}^d$
of $k\geq 3$ pure states with equal and real pairwise inner products

\begin{align}
        \braket{\psi_i}{\psi_j}
        = \gamma
        :=\left(\frac{k-2}{k-1}\right)^{\frac{2N+1}{2N(N+1)}},
        \quad \forall~i\neq j.
    \end{align}
    The existence of such a set in dimension $d\geq k$ follows from the fact that, for any $k\geq 2$, any $d\geq k$, and any $\gamma\in[0,1)$, there exist $k$ pure states in $\mathbb{C}^d$ with equal and real pairwise inner product $\gamma$ \cite{roa2011conclusive}.

    Next, observe that
    \begin{align}
        \frac{1}{N+1} < \frac{2N+1}{2N(N+1)} < \frac{1}{N}.
    \end{align}
    Since $0<\frac{k-2}{k-1}<1$, it follows that
    \begin{align}
        \left(\frac{k-2}{k-1}\right)^{\frac{1}{N}} < \gamma < \left(\frac{k-2}{k-1}\right)^{\frac{1}{N+1}},
    \end{align}
    and hence
    \begin{align}
        \gamma^{N} > \frac{k-2}{k-1}\\\gamma^{N+1} < \frac{k-2}{k-1}.
    \end{align}
    
    Therefore, by Lemma \ref{neccesary_and_sufficient}, the $N$-copy set $\mathcal{S}_N$ is not antidistinguishable, whereas the $(N+1)$-copy set $\mathcal{S}_{(N+1)}$ is antidistinguishable. This completes the proof.  
    \end{proof}

\subsubsection{Complete characterisation of $N$-copy antidistinguishability of three pure states with equal overlaps}

The above construction (and also Theorem \ref{tight}) provides a complete and explicit characterisation of the number of copies necessary and sufficient to activate antidistinguishability for sets of three or more pure states with equal and real pairwise inner products, in terms of the inner product. However, in the special case of three pure states, one can obtain a similar characterization even when the  inner products are allowed to take complex values.

\begin{proposition}\label{prop:three}
    Let $\mathcal{S}=\{\ket{\psi_1},\ket{\psi_2},\ket{\psi_3}\}$ be a set of three pure states with equal pairwise overlaps and let $x=\abs{\braket{\psi_i}{\psi_j}}^2,~~\forall~ i\neq j,$ be the squared pairwise overlap. The set $\mathcal{S}$ is $N$-copy antidistinguishable \emph{if and only if}
    \begin{align}
        \left(\frac{1}{4}\right)^{\frac{1}{N-1}} <x \le \left(\frac{1}{4}\right)^{\frac{1}{N}}.
    \end{align}
\end{proposition}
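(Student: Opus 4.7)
The plan is to specialise Proposition~\ref{prop:three-state-iff} to the $N$-copy set $\mathcal{S}_N = \{\ket{\psi_1}^{\otimes N}, \ket{\psi_2}^{\otimes N}, \ket{\psi_3}^{\otimes N}\}$, whose pairwise squared overlaps are all equal to $x^N$. Writing $y := x^N$, the two conditions in Proposition~\ref{prop:three-state-iff} for antidistinguishability of $\mathcal{S}_N$ become
\begin{align}
3y &< 1, \\
(3y - 1)^2 &\ge 4 y^{3}.
\end{align}

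The key algebraic step is to rewrite the second inequality as $4y^3 - 9y^2 + 6y - 1 \le 0$ and to factor the cubic. Observing that $y = 1$ is a double root (which is forced by the fact that three identical states are trivially antidistinguishable in the degenerate sense only at the boundary of failure), I would factor it as
\begin{equation}
4y^3 - 9y^2 + 6y - 1 = (y-1)^2 (4y - 1).
\end{equation}
Since $(y-1)^2 \ge 0$, the second condition is equivalent to $4y - 1 \le 0$, i.e.\ $y \le 1/4$. This bound automatically implies $y < 1/3$, so the first condition is redundant. Hence $\mathcal{S}_N$ is antidistinguishable if and only if $x^N \le 1/4$, equivalently $x \le (1/4)^{1/N}$.

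To turn this into a statement about $N$-copy antidistinguishability in the sense of Definition~\ref{N-copy-AD}, I would then impose both that $\mathcal{S}_N$ is antidistinguishable and that $\mathcal{S}_{N-1}$ is not. By the equivalence just established, these translate respectively to $x \le (1/4)^{1/N}$ and $x > (1/4)^{1/(N-1)}$, which together give the claimed window
\begin{equation}
\left(\tfrac{1}{4}\right)^{1/(N-1)} < x \le \left(\tfrac{1}{4}\right)^{1/N}.
\end{equation}

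The only nontrivial step is the cubic factorisation, which is purely algebraic once one spots the double root at $y=1$; the rest is a clean translation between the overlap threshold and copy number via the relation $y = x^N$. No obstruction requiring extra machinery (such as Proposition~\ref{prop:incoherence_cond} or semidefinite programming) is expected, since Proposition~\ref{prop:three-state-iff} already gives an exact characterisation in the three-state case.
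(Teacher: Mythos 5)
Your proof is correct and follows essentially the same route as the paper: reduce via Proposition~\ref{prop:three-state-iff} to the single condition $(3y-1)^2 \ge 4y^3$ with $y = x^N$, show it is equivalent to $y \le 1/4$, and then sandwich $x$ between the $N$-copy and $(N-1)$-copy thresholds. Your explicit factorisation $4y^3 - 9y^2 + 6y - 1 = (y-1)^2(4y-1)$ is a welcome detail that the paper merely asserts.
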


\begin{proof}
    We begin by first noting that from Proposition~\ref{prop:three-state-iff}, it follows that $\mathcal{S}$ is antidistinguishable if and only if
    \begin{align}
        &x < \frac{1}{3},\label{c1} \\
        \text{and}~~&(3x-1)^2\geq 4x^3.\label{c2}
    \end{align}
    However, these two conditions are not independent; satisfaction of Eq.~\eqref{c2} imply the satisfaction of Eq.~\eqref{c1}. Note that it may happen that for some values of $x$, Eq.~\eqref{c1} is satisfied but Eq.~\eqref{c2} is not. Therefore, we can say that $\mathcal{S}$ is antidistinguishable \emph{if and only if} Eq.~\eqref{c2} is satisfied, i.e., $0\leq x\leq\frac{1}{4}$.  

    For $x>\frac{1}{4}$, the set $\mathcal{S}$ is therefore not antidistinguishable. We then consider the $N$-copy set $\mathcal{S}_N=\{\ket{\psi_1}^{\otimes N},\ket{\psi_2}^{\otimes N},\ket{\psi_3}^{\otimes N}\}$. The pairwise squared overlaps of the $N$-copy states are given by
    \begin{align}
        y:=\abs{\braket{\psi_i^{\otimes N}}{\psi_j^{\otimes N}}}^2=\abs{\braket{\psi_i}{\psi_j}}^{2N}=x^N,\quad\forall i\neq j\in\{1,2,3\}.
    \end{align}
    Therefore, $\mathcal{S}_N$ is antidistinguishable if and only if 
    \begin{align}
        &(3y-1)^2-4y^3\geq 0,\nonumber\\
        \text{or,}\quad&y\leq \frac{1}{4} \iff x\leq \left(\frac{1}{4}\right)^{\frac{1}{N}}.
    \end{align}

    On the other hand, the $(N-1)$-copy set $\mathcal{S}_{(N-1)}=\{\ket{\psi_1}^{\otimes (N-1)},\ket{\psi_2}^{\otimes (N-1)},\ket{\psi_3}^{\otimes (N-1)}\}$ is not antidistinguishable if and only if
    \begin{align}
        y'=x^{N-1} > \frac{1}{4} \iff x > \left( \frac{1}{4}\right)^{\frac{1}{N-1}},
    \end{align}
    where $y':=\abs{\braket{\psi_i^{\otimes (N-1)}}{\psi_j^{\otimes (N-1)}}}^2$ denotes the pairwise squared overlaps of the $(N-1)$-copy states.

    Therefore the set $\mathcal{S}$ is $N$-copy antidistinguishable if and only if $\left( \frac{1}{4}\right)^{\frac{1}{N-1}}<x\leq\left(\frac{1}{4}\right)^{\frac{1}{N}}$. This completes the proof.
\end{proof}




\begin{figure}[t!]
\centering
\includegraphics[scale=0.65]{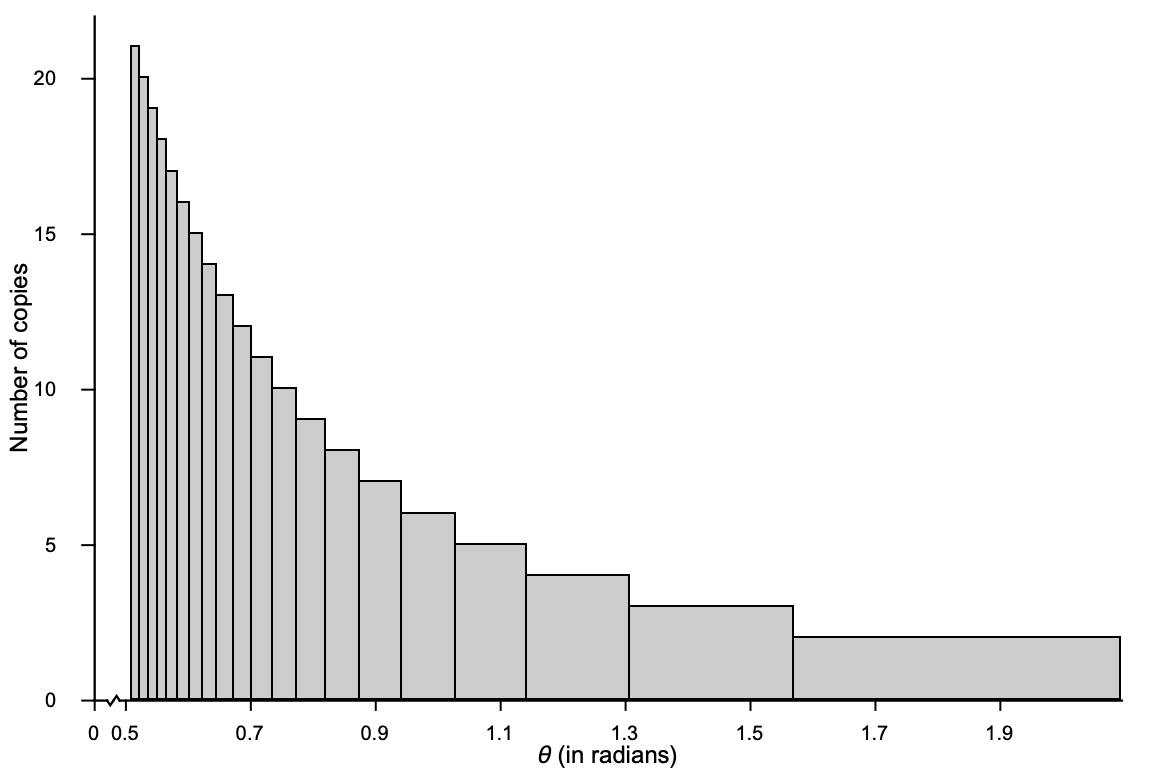}
\caption{
Number of copies necessary and sufficient to activate antidistinguishability for the set of three qubit states defined in Eq.~\eqref{ex}, as a function of the parameter $\theta$. A line break has been used as the required number of copies becomes arbitrarily large as $\theta$ approaches zero.}
\label{fig1}
\vspace{-0.5cm}
\end{figure}

\noindent\textbf{Example.}
To illustrate the above result, we consider the following family of three qubit states:
\begin{align}
    \mathcal{S}
    =
    \Bigl\{
        \ket{0},\,
        \cos\frac{\theta}{2}\ket{0}+\sin\frac{\theta}{2}\ket{1},\,
        \cos\frac{\theta}{2}\ket{0}+e^{i\phi}\sin\frac{\theta}{2}\ket{1}
    \Bigr\},\nonumber\\
    \text{where,}\qquad \phi=\arccos\!\left(\frac{\cos\theta}{1+\cos\theta}\right),
\quad\text{and}\quad
\theta\in\left[0,\frac{2\pi}{3}\right].\label{ex}
\end{align}
For this choice of parameters, the three states have equal pairwise overlaps, with
\(
\abs{\braket{\psi_i}{\psi_j}}^2=\cos^2\left({\frac{\theta}{2}}\right)
\)
for all $i\neq j$.

Using the criterion derived above, one can determine the necessary and sufficient number of copies required to activate antidistinguishability as a function of the parameter $\theta$. This dependence is shown in Fig.~\ref{fig1}, which shows how the required copy number increases exponentially as the states become less distinguishable (i.e., as $\theta$ decreases, or $x=\cos^2\left(\frac{\theta}{2}\right)$ increases).

\section{Conclusions}
\label{conclusion}

In conclusion, we investigated the role of multiple identical copies in enabling quantum state exclusion for sets of states that are not antidistinguishable in the single-copy setting.  We proved that any set of three or more pure states becomes antidistinguishable with a finite number of copies [Theorem \ref{every_set}]. This result is similar to the well-known result in unambiguous discrimination of pure states, where sufficient copies guarantee unambiguous distinguishability.

The number of copies required to activate state exclusion, however, depends on the state set. We derived a general upper bound in terms of the set cardinality and the maximum pairwise overlap of the states [Eq.~\eqref{eq:bound}  in the proof of Theorem \ref{every_set}], although this bound is not tight in general. For the special class of pure-state sets with equal and real inner products, we determined the exact minimum number of copies and showed that it is strictly smaller than the general bound [Theorem~\ref{tight}].

While the required number of copies is always finite, it can be arbitrarily large. In particular, for every positive integer $N$, we constructed sets of pure states that are not $N'$-copy antidistinguishable for any $N' \leq N$ [Theorem~\ref{every_N}]. We also presented a class of state sets which remains non-antidistinguishable with $N$ or fewer copies, but becomes antidistinguishable with $(N+1)$-copies [Theorem~\ref{N_N+1}].

Finally, we obtained a complete characterisation of the minimum number of copies required for many-copy antidistinguishability for equiangular pure-state sets. For arbitrary cardinality, this characterisation applies to equiangular states with real inner products, while for three states it extends to the case of complex inner products [Proposition~\ref{prop:three}], illustrated using qubit examples. Our analysis places no restrictions on the allowed measurements; investigating many-copy antidistinguishability under constrained measurement settings is a direction for future work.

\subsection*{Acknowledgment}
DR acknowledges financial support from Council of Scientific and Industrial Research (CSIR) under CSIR Award No.: 09/0015(15326)/2022-EMR-I. TG acknowledges financial support from the ANRF National Post-Doctoral Fellowship (NPDF) under File No. PDF/2025/005147. SS acknowledges financial support from the European Union (ERC StG ETQO, Grant Agreement no.\ 101165230). Views and opinions expressed are however those of the author(s) only and do not necessarily reflect those of the European Union or the European Research Council. Neither the European Union nor the granting authority can be held responsible for them.


\bibliography{multi}

\end{document}